\newtheorem{theorem}{Theorem}
\newtheorem{lemma}{Lemma}
\newtheorem{corollary}{Corollary}
\newcommand{\kpartition}[1]{degree-${#1}$ edge partition\xspace}
\begin{document}

\title{New Results on Edge Partitions of 1-plane Graphs
\thanks{NSERC funding is gratefully acknowledged for WE and SW.}}

\author{Emilio Di Giacomo\thanks{Universit{\`a} degli Studi di Perugia, Italy, \texttt{\{name.surname\}@unipg.it}} \and Walter Didimo\footnotemark[2] \and William S. Evans\thanks{University of British Columbia, Canada, \texttt{will@cs.ubc.ca}} \and Giuseppe Liotta\footnotemark[2] \and Henk Meijer\thanks{University College Roosevelt, The Netherlands, \texttt{h.meijer@ucr.nl}} \and Fabrizio Montecchiani\footnotemark[2] \and Stephen K. Wismath\thanks{University of Lethbridge, Canada, \texttt{wismath@uleth.ca}}}

\date{}

\maketitle

\begin{abstract}
A $1$-plane graph is a graph embedded in the plane such that each edge is crossed at most once. A NIC-plane graph is a $1$-plane graph such that any two pairs of crossing edges share at most one end-vertex.  An edge partition of a $1$-plane graph $G$ is a coloring of the edges of $G$ with two colors, red and blue, such that both the graph induced by the red edges and the graph induced by the blue edges are plane graphs. We prove the following:
$(i)$ Every  NIC-plane graph admits an edge partition such that the red graph has maximum vertex degree three; this bound on the vertex degree is worst-case optimal. 
$(ii)$ Deciding whether a $1$-plane graph admits an edge partition such that the red graph has maximum vertex degree two is NP-complete.
$(iii)$ Deciding whether a $1$-plane graph admits an edge partition such that the red graph has maximum vertex degree one, and computing one in the positive case, can be done in quadratic time.
Applications of these results to graph drawing are also discussed.
\end{abstract}

\section{Introduction}

Partitioning the edges of a graph such that each partition set induces a subgraph with special properties is a fundamental problem in graph theory, with applications in graph algorithms and graph drawing. In 1988, Colbourn and Elmallah~\cite{ec-pepg+-88} showed that the edge set of every planar graph can be partitioned into two partial $3$-trees. This result was then improved by Kedlaya~\cite{Kedlaya1996238} and by Ding {\em et al.}~\cite{Ding2000221} who showed how to partition the edges of a planar graph into two partial $2$-trees. Gon{\c{c}}alves~\cite{DBLP:conf/stoc/Goncalves05} showed that the edge set of every planar graph can be partitioned into two outerplanar graphs, thus solving a  conjecture by Chartrand, Geller, and Hedetniemi~\cite{Chartrand197112}. Concerning more than two partition sets, Schnyder~\cite{DBLP:conf/soda/Schnyder90} proved that the edges of a maximal planar graph can be colored with three colors so that each partition set forms a spanning tree of the graph. This edge partition can be used to obtain a  planar straight-line drawing of the graph on an integer grid of quadratic size~\cite{DBLP:conf/soda/Schnyder90}.

\begin{figure*}[t]
\begin{subfigure}[b]{.32\linewidth}
\centering
\includegraphics[scale=0.4,page=1]{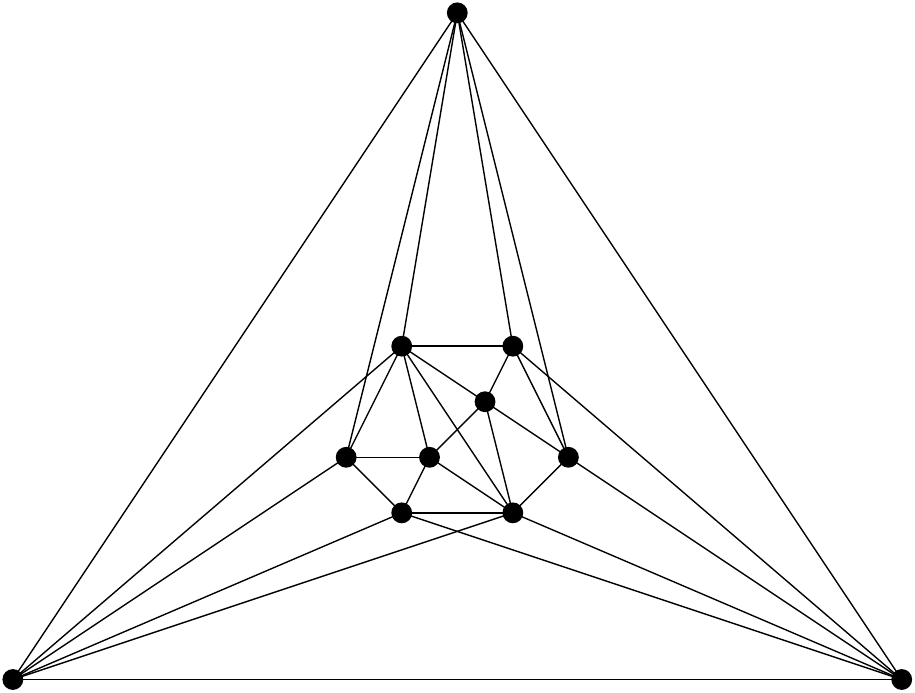}
\caption{}\label{fi:nic}
\end{subfigure}%
\begin{subfigure}[b]{.32\linewidth}
\centering
\includegraphics[scale=0.4,page=2]{figures/example}
\caption{}\label{fi:nic-ep}
\end{subfigure}%
\begin{subfigure}[b]{.32\linewidth}
\centering
\includegraphics[scale=0.23,page=3]{figures/example}
\caption{}\label{fi:opvr}
\end{subfigure}%
\caption{(a) A NIC-plane graph $G$; (b) A \kpartition{3} of $G$ (the  red edges are bold); (c) An ortho-polygon visibility representation $\gamma$ of $G$ computed by exploiting the edge partition shown in (b).}
\end{figure*}

We study edge partitions of \emph{$1$-plane graphs}, i.e., graphs embedded in the plane with at most one crossing per edge. These graphs have been the subject of a rich body of literature, as shown in the annotated bibliography of Kobourov \emph{et al.}~\cite{2017arXiv170302261K}.
An \emph{IC-plane graph} is a $1$-plane graph such that any two pairs of crossing edges do not share an end-vertex (see e.g.~\cite{DBLP:journals/tcs/BrandenburgDEKL16,ks-cpgic-JGT10,DBLP:journals/ipl/LiottaM16,z-dcmgp-AMS14,zl-spgic-CEJM13}), where IC stands for ``independent crossings''. A \emph{NIC-plane graph} is a $1$-plane graph where any two pairs of crossing edges of $G$ share at most one end-vertex (see e.g.~\cite{DBLP:journals/corr/CzapS14,z-dcmgp-AMS14}), where NIC stands for ``near independent crossings''.  An example of NIC-plane graph is shown in Fig.~\ref{fi:nic}. 

An \emph{edge partition}  of a $1$-plane graph $G$ is a coloring of each of its edges with one of two colors, \emph{red} and \emph{blue}, such that both the \emph{red graph} $G_R$ induced by the red edges and the \emph{blue graph} $G_B$ induced by the blue edges are plane graphs. An edge partition such that one of $G_R$ and $G_B$, say $G_R$, has maximum vertex degree $k$ will be called a \emph{\kpartition{k}}. An example of edge partition is shown in Figure~\ref{fi:nic-ep}. 

Czap and Hud\'ak~\cite{DBLP:journals/combinatorics/CzapH13} proved that optimal $1$-plane graphs, i.e., $1$-plane graphs with the maximum number of edges, admit an edge partition  such that the red graph is a forest. Ackerman~\cite{DBLP:journals/dam/Ackerman14} generalized this result to any (non-optimal) 1-plane graph. More recently, Lenhart \emph{et al.}~\cite{DBLP:journals/tcs/LenhartLM17} proved that every optimal $1$-plane graph admits a \kpartition{4} that can be computed in linear time;  the bound on the vertex degree of the red graph is worst-case optimal. The study of {\kpartition{k}}s (with small values of $k$) has interesting applications also in graph drawing. For example, an \emph{ortho-polygon visibility representation (OPVR)} of an embedded graph $G$ maps each vertex to an orthogonal polygon, and each edge $(u,v)$ to a vertical or horizonal uncrossed segment between the  polygons of $u$ and $v$~\cite{ArXiv}. The \emph{vertex complexity} of an OPVR $\gamma$ is the minimum number $k$ such that any polygon of $\gamma$ contains at most $k$ reflex corners. An example of OPVR with vertex complexity four is shown in Figure~\ref{fi:opvr}. Di Giacomo \emph{et al.}~\cite{ArXiv} proved that every $n$-vertex $3$-connected $1$-plane graph that admits a \kpartition{k}, also admits an embedding-preserving OPVR with vertex complexity at most $2k$. Motivated by this result, Di Giacomo \emph{et al.}~\cite{ArXiv} proved that every $3$-connected $1$-plane graph admits a \kpartition{6}, and that this bound on the vertex degree of the red graph is worst-case optimal. 

The contribution of this paper is as follows:

\begin{itemize}

\item We present a linear-time algorithm to compute {\kpartition{3}}s of $3$-connected NIC-plane graphs; this bound on the vertex degree of the red graph is worst-case optimal (Section~\ref{se:deg3}). This result improves the one in~\cite{ArXiv}, for the case of NIC-plane graphs.  As a consequence of our result and of the results in~\cite{ArXiv}, every $3$-connected NIC-plane graph admits an OPVR with vertex complexity at most six, which improves, for the case of NIC-plane graphs, the upper bound on the vertex complexity of $3$-connected $1$-plane graphs proved in~\cite{ArXiv}. Figure~\ref{fi:opvr} shows an OPVR  with vertex complexity four, obtained from the \kpartition{2} in Figure~\ref{fi:nic-ep}.

\item We then turn our attention to the complexity of deciding the existence of {\kpartition{k}}s, when $k <3$. We prove that deciding whether a $1$-plane graph admits a \kpartition{2} is NP-complete, even when the graph is NIC-plane (Section~\ref{se:deg2}). 
On the positive side, we show a quadratic-time algorithm that tests whether a $1$-plane graph admits a \kpartition{1}, and that computes such a partition if it exists (Section~\ref{se:deg1}). 

\end{itemize}

\section{Preliminaries}\label{se:preliminaries}

A \emph{drawing} $\Gamma$ of a graph $G=(V,E)$ is a mapping of the vertices of $V$ to points of the plane, and of the edges of $E$ to Jordan arcs connecting their corresponding endpoints but not passing through any other vertex. We only consider \emph{simple} drawings, i.e., drawings such that two arcs representing two edges have at most one point in common, and this point is either a common endpoint or a common interior point where the two arcs properly cross each other. $\Gamma$ is \emph{planar} if no edge is crossed. A \emph{planar graph} is a graph that admits a planar drawing. A planar drawing of a graph subdivides the plane into topologically connected regions, called \emph{faces}. The infinite region is the \emph{outer face}. A \emph{planar embedding} is an equivalence class of planar drawings that define the same (i.e., topologically equivalent) set of faces and same outer face. A \emph{plane graph} is a planar graph with a given planar embedding.   
The concept of planar embedding is extended to non-planar drawings as follows. Given a non-planar drawing $\Gamma$, replace each crossing with a dummy vertex. The resulting planarized drawing has a planar embedding. An \emph{embedding} of a (non-planar) graph $G$ is an equivalence class of  drawings whose planarized versions have the same planar embedding. An \emph{embedded graph} $G$ is a graph with a given embedding. A \emph{$1$-plane graph} $G$ is an embedded graph with at most one crossing per edge. It is known that $1$-plane graphs with $n$ vertices have at most $n-2$ crossings and $4n-8$ edges~\cite{Suzuki2010}. A $1$-plane graph with $4n-8$ edges is called \emph{optimal}. 
Let $e_1=(u_1,v_1)$ and $e_2=(u_2,v_2)$ be two crossing edges, and let $N(e_1,e_2)=\{u_1,v_1,u_2,v_2\}$. A $1$-plane graph $G$ is a \emph{NIC-plane graph} if any two pairs of crossing edges, $\{e_1,e_2\}$ and $\{e_3,e_4\}$, are such that $|N(e_1,e_2) \cap N(e_3,e_4)| \leq 1$.

\section{Degree-3 edge partitions}\label{se:deg3}
 
In this section we prove that every $3$-connected NIC-plane graph admits a \kpartition{3}. 
Also, we exhibit $3$-connected NIC-plane graphs for which the red graph of any edge partition has maximum vertex degree at least three.  
A $1$-plane graph $G$ is \emph{crossing augmented}, if for each pair of crossing edges $(u,v)$ and $(w,z)$, the subgraph of $G$ induced by $\{u,v,w,z\}$ is $K_4$. The edges of such a $K_4$ distinct from $(u,v)$ and $(w,z)$ form a 4-cycle, and are called the \emph{cycle edges} of $(u,v)$ and $(w,z)$. A $1$-plane graph can always be made crossing augmented in linear time, by adding the missing cycle edges without introducing any new edge crossings (see e.g.~\cite{Suzuki2010}). A pair of crossing edges form a \emph{kite} if the crossing point is inside the 4-cycle formed by the corresponding cycle edges. The proof of the next theorem exploits an argument similar to that used in~\cite{ArXiv} for $3$-connected $1$-plane graphs.

\begin{theorem}\label{th:deg3}
Every $3$-connected NIC-plane graph with $n$ vertices admits a \kpartition{3}, which can be computed in $O(n)$ time. This bound on the vertex degree of the red graph is worst-case optimal.
\end{theorem}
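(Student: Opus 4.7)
I first augment $G$ to be crossing-augmented by inserting any missing cycle edges around each crossing pair; this operation preserves NIC-planarity and $3$-connectivity and runs in $O(n)$ time. A known structural fact for $3$-connected $1$-plane graphs, used also in~\cite{ArXiv}, says that every pair of crossing edges is then a kite, so I may assume every crossing comes with a bounding $K_4$ whose diagonals are the crossing edges.

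My coloring strategy is to assign all non-crossing edges blue and, for each kite, color exactly one of its two crossing edges red (the other stays blue). The blue graph is then $G$ with one crossing edge removed per kite, hence plane. The red graph consists of one chord per kite; by the NIC-plane property, distinct kites share at most one vertex, so these chords sit inside essentially vertex-disjoint $4$-cycles and do not cross one another. Thus both the red and blue subgraphs are automatically plane, regardless of the per-kite choices; the remaining question is purely combinatorial.

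The heart of the argument is to make the per-kite choices so that every vertex has red degree at most three. Here the NIC condition becomes crucial: for any vertex $v$, the kites containing $v$ are pairwise vertex-disjoint apart from $v$ itself. Consequently, flipping one kite $K$'s choice at $v$ (switching the red chord from $v$'s side to the opposite side of $K$) decreases $d_R(v)$ by one and increases $d_R$ by one at two other vertices that do \emph{not} appear in any other kite containing $v$. Following the approach of~\cite{ArXiv} and sharpening it using this locality, my plan is a local-exchange argument: start from any initial assignment (say, always picking the crossing edge through the lower-indexed vertex), then while some vertex $v$ has $d_R(v) > 3$, flip one of $v$'s kites. A suitable potential function — e.g., the lexicographically-sorted vector of red degrees or $\sum_v \max(0, d_R(v)-3)^2$ with tie-breaking on arrival time — should strictly decrease per flip, yielding termination. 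Using an appropriate incidence data structure, all flips can be carried out in $O(n)$ total time.

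The main obstacle is ruling out cascading overloads: a flip could in principle push a neighbor above three. The NIC property is exactly what prevents this cascade from spiraling, since each flip affects two ``fresh'' vertices outside the kite-star at $v$, so repeated flips do not revisit the same witness configurations. The $3$-connectivity assumption enters only through guaranteeing that every crossing is a kite. Finally, worst-case optimality is witnessed by exhibiting a small $3$-connected NIC-plane graph — for instance, a rotationally symmetric gadget built around a central vertex surrounded by sufficiently many kites and padded with non-crossing edges — in which a straightforward counting argument forces some vertex to have red degree at least three in every edge partition.
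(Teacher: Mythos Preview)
Your approach diverges from the paper's in the core step, and the divergence hides a real gap. The paper does not use any local-exchange or potential-function argument. After crossing-augmenting $G$ it observes (from the NIC property alone, not from $3$-connectivity) that all cycle edges are uncrossed, removes the crossing edges, triangulates the resulting plane graph, and computes a Schnyder $3$-orientation. For each crossing pair it then colors red the diagonal whose two endpoints each have an outgoing cycle edge in that kite's $4$-cycle; such a diagonal always exists by a pigeonhole on the orientation of the four cycle edges. Because distinct kites share no cycle edge (NIC again), every red edge at a vertex $v$ can be charged to a distinct outgoing edge of $v$, of which there are at most three. This simultaneously proves existence and gives an $O(n)$ algorithm.

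Your flip argument, by contrast, does not establish existence. A flip at an overloaded vertex $v$ decreases $d_R$ by one at \emph{both} endpoints of the old red diagonal and increases it by one at \emph{both} endpoints of the new one. Neither potential you mention is monotone under this move: with $\sum_v \max(0,d_R(v)-3)^2$, a flip can take $v$ from $4$ to $3$ and its partner from $1$ to $0$ while pushing the other two vertices from $3$ to $4$, for a net increase of $+1$; the lexicographic version fails for the same reason. The ``fresh vertices'' observation only says the two raised vertices are outside $v$'s kite-star, not that they are far from the threshold---they may already sit at $3$ because of kites not involving $v$. Without an independent proof that a \kpartition{3} exists, there is nothing preventing the local search from cycling or stalling at a bad local optimum; and even if it terminated, you give no bound tying the number of flips to $O(n)$.

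Your lower-bound sketch also misses. Piling many kites around a single central vertex forces nothing, since every such kite can direct its red diagonal away from the center. The paper's construction is global: take a plane triangulation on $n\ge 7$ vertices and make each of its $3n-6$ edges a cycle edge of its own kite, then $3$-connect by triangulating. Every red edge then has exactly one endpoint among the original $n$ vertices, so $3n-6$ red incidences fall on $n$ vertices, forcing some vertex to have red degree at least $3$.
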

\begin{proof}
We assume that $G$ is crossing augmented, as otherwise it can be augmented in $O(n)$ time~\cite{Suzuki2010}. 

Observe first that, since $G$ is a NIC-plane graph, the cycle edges of $G$ are not crossed. Indeed, let $e_1=(u,w)$ be a cycle edge of two crossing edges $e_3=(u,v)$ and $e_4=(w,z)$. Suppose, for a contradiction, that $e_1=(u,w)$ is crossed by an edge $e_2=(x,y)$ (see also Figure~\ref{fi:deg3-ub}). Then $|N(e_1,e_2) \cap N(e_3,e_4)| \geq 2$, a contradiction since $G$ is NIC. 

From $G$ we remove all crossing edges and suitably orient the remaining edges; we then use the orientation of the cycle edges of a crossing pair to determine which edge of the pair is colored red. Namely, let $G_p$ be the plane graph obtained from $G$ by removing all the crossing edges. As explained above, all the cycle edges of $G$ are in $G_p$.  Let $G_p^+$ be a plane triangulation obtained by edge-augmenting $G_p$. We apply a {\em Schnyder trees decomposition}~\cite{DBLP:conf/soda/Schnyder90} to $G_p^+$. Schnyder~\cite{DBLP:conf/soda/Schnyder90} proved that the internal edges (i.e., those that do not belong to the outer face) of a plane triangulation can be oriented such that each internal vertex has exactly three outgoing edges and the vertices of the outer face have no outgoing edge. We arbitrarily orient the edges of the outer face of $G_p^+$ and we obtain a {\em $3$-orientation} of $G_p^+$, that is an orientation of its edges such that every vertex has at most three outgoing edges. Based on this $3$-orientation, we now show how to color red one edge for each pair of crossing edges so that each vertex is incident to at most three red edges.
Let $(u,v)$ and $(w,z)$ be a pair of crossing edges of $G$, and let $C$ be the 4-cycle formed by the corresponding cycle edges. We claim that, in $G_p^+$, either the pair of vertices $\{u,v\}$, or the pair $\{w,z\}$, is such that both the vertices have outdegree at least one in $C$. Since two adjacent vertices in $C$ cannot both have outdegree two (in $C$), two non-adjacent vertices in $C$ must both have outdegree at least one.
\begin{figure}[t]
\centering
\begin{subfigure}[b]{.4\linewidth}
\centering
\includegraphics[scale=0.45,page=1]{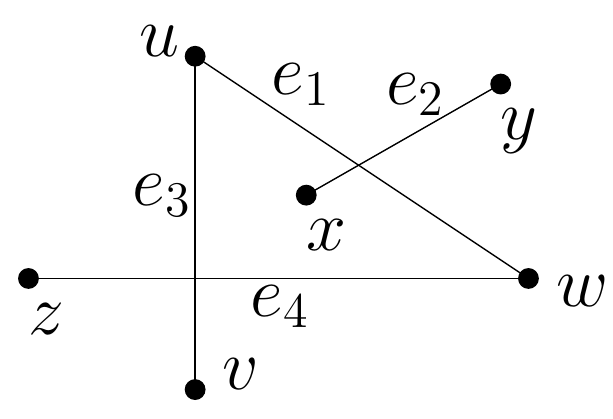}
\caption{}\label{fi:deg3-ub}
\end{subfigure}%
\begin{subfigure}[b]{.4\linewidth}
\centering
\includegraphics[scale=0.45,page=1]{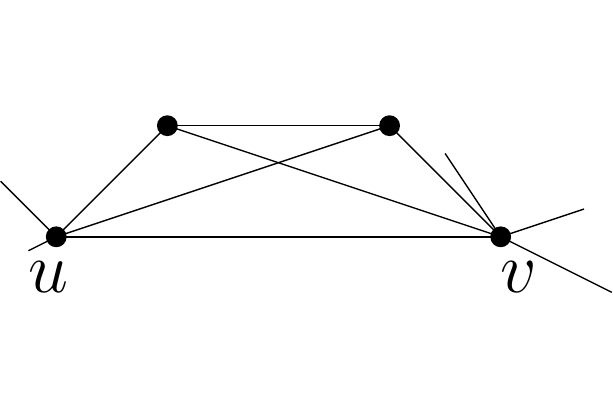}
\caption{}\label{fi:deg3-lb}
\end{subfigure}%
\caption{Illustration for the proof of Theorem~\ref{th:deg3}.}
\end{figure}
We now partition the edge set of $G$ as follows. For each pair of crossing edges $(u,v)$ and $(w,z)$ of $G$, we color red the edge connecting the pair, $\{u,v\}$ or $\{w,z\}$, for which the above claim holds (if it holds for both pairs, then we arbitrarily choose one). With this choice, each end-vertex of a red edge has one outgoing edge among the corresponding cycle edges. Recall that every vertex is incident to at most three outgoing edges in $G_p^+$. Also, by definition of NIC-plane graphs, no two pairs of crossing edges share a cycle edge. It follows that each vertex is incident to at most three red edges, as desired. The linear time complexity is a consequence of the fact that a $1$-plane graph has $O(n)$ edges (see e.g.~\cite{Suzuki2010}) and that Schnyder trees can be computed in $O(n)$ time~\cite{DBLP:conf/soda/Schnyder90}.

We conclude the proof by showing that there exist $3$-connected NIC-plane graphs such that the red graph of any edge partition has maximum vertex degree at least $3$. Let $G_p$ be a plane triangulation with $n \geq 7$ vertices. We construct the graph $G$ from $G_p$ as follows. We identify each edge $(u,v)$ of $G_p$ with a cycle edge of a kite (we can arbitrarily choose in which of the two faces of $(u,v)$ the kite will be placed), as shown in Figure~\ref{fi:deg3-lb}. The resulting graph is NIC-plane, since any two kites share at most one vertex, but it is not $3$-connected, since the end-vertices of each edge of $G_p$ are a separation pair of $G$. In order to make it $3$-connected, we triangulate all the faces of degree greater than three (without introducing parallel edges). Since $G_p$ has $3n-6$ edges and $G$ contains one pair of crossing edges for each edge of $G_p$, the red graph $G_R$ of any edge partition of $G$ has at least $3n-6$ edges. Also, each of these edges is incident to one vertex of $G_p$ (and to one vertex of $G \setminus G_p$). If the maximum vertex degree of $G_R$ is $k$, then it follows that $kn \geq 3n - 6$, which implies $k \geq  3 - \frac{6}{n}$, and, since $k$ is integer, $k \geq 3$ for $n \geq 7$.
\end{proof}

Given a $3$-connected NIC-plane graph $G$, we can compute a \kpartition{3} of $G$ by applying the algorithm of Theorem~\ref{th:deg3}, and then apply the algorithm of Lemma~$4$ in~\cite{ArXiv} to compute, in linear time, an embedding preserving OPVR $\gamma$ of $G$ with vertex complexity at most $6$. This proves the next corollary.

\begin{corollary}\label{co:opvr}
Let $G$ be a $3$-connected NIC-plane graph with $n$ vertices. There exists an $O(n)$-time algorithm that computes an embedding-preserving OPVR of $G$ with vertex complexity at most $6$.
\end{corollary}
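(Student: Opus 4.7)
The plan is straightforward because this corollary is a direct composition of two already-established results, so the proof is essentially a bookkeeping exercise. First I would invoke Theorem~\ref{th:deg3}, which provides an algorithm that, given a $3$-connected NIC-plane graph $G$ with $n$ vertices, computes a \kpartition{3} of $G$ in $O(n)$ time. Let $G_R$ and $G_B$ be the resulting red and blue subgraphs, with $G_R$ having maximum vertex degree at most $3$.

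Next I would feed this partition into the algorithm of Lemma~$4$ of~\cite{ArXiv}, which is cited in the introduction as the statement that every $n$-vertex $3$-connected $1$-plane graph admitting a \kpartition{k} also admits an embedding-preserving OPVR with vertex complexity at most $2k$. Since $G$ is $3$-connected, $1$-plane (every NIC-plane graph is $1$-plane by definition), and has a \kpartition{3} from the previous step, this lemma applies with $k = 3$ and yields an embedding-preserving OPVR of $G$ with vertex complexity at most $6$. I would cite the fact that the constructive proof of Lemma~$4$ in~\cite{ArXiv} runs in linear time, so the OPVR itself is produced in $O(n)$ time given the edge partition.

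Finally I would combine the running times: computing the \kpartition{3} takes $O(n)$ time by Theorem~\ref{th:deg3}, and converting it to an OPVR takes an additional $O(n)$ time by Lemma~$4$ of~\cite{ArXiv}, giving the claimed $O(n)$ total running time. The only "obstacle," if it can be called that, is verifying that the hypotheses of Lemma~$4$ in~\cite{ArXiv}, namely $3$-connectivity and $1$-planarity, are inherited by NIC-plane graphs; both are immediate, since NIC-plane is a restriction of $1$-plane and $3$-connectivity is assumed in the statement of the corollary. No genuine new argument is needed beyond stitching the two cited results together.
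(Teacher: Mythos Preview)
Your proposal is correct and matches the paper's own argument essentially verbatim: apply Theorem~\ref{th:deg3} to obtain a \kpartition{3} in $O(n)$ time, then invoke Lemma~4 of~\cite{ArXiv} (with $k=3$) to produce in linear time an embedding-preserving OPVR of vertex complexity at most $2k=6$. No additional ideas are required.
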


\section{Degree-2 edge partitions}\label{se:deg2}

In this section we show that the problem of deciding whether a $1$-plane graph admits a \kpartition{2}, called DEGREE-2 EDGE PARTITION, is NP-complete. The hardness is proved by a reduction from PLANAR 3-SAT. We start by introducing some gadgets.

\begin{figure}[t]
\begin{subfigure}[b]{.5\linewidth}
\centering
\includegraphics[width=0.65\linewidth, page=2]{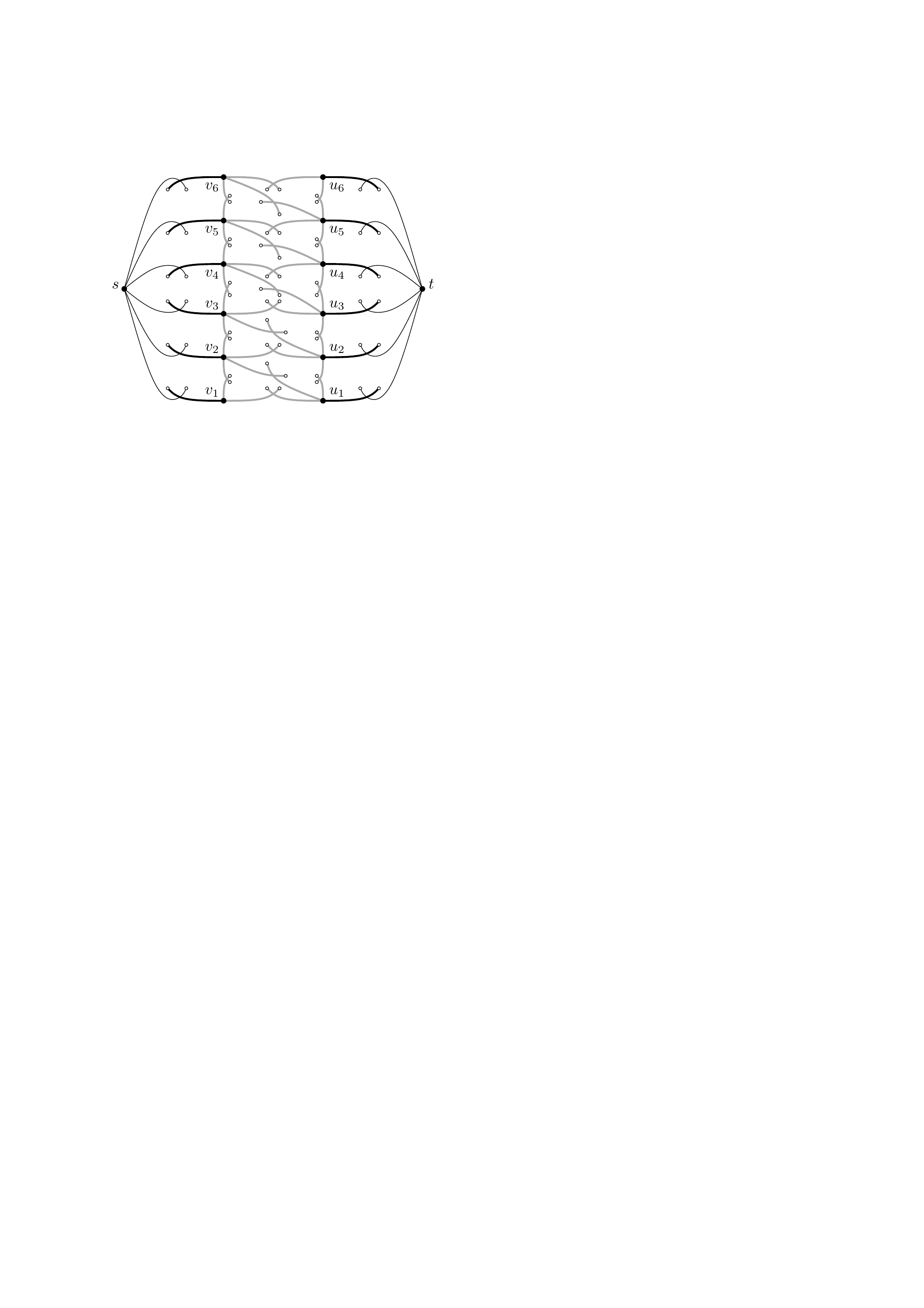}
\caption{}\label{fi:blob-1}
\end{subfigure}%
\begin{subfigure}[b]{.5\linewidth}
\centering
\includegraphics[width=0.65\linewidth, page=3]{blob.pdf}
\caption{}\label{fi:blob-2}
\end{subfigure}
\caption{(a) A blob gadget $B$. (b) A \kpartition{2} of $B$\label{fi:blob}}
\end{figure}

\paragraph{Blob gadget}  The \emph{blob gadget} is shown in Figure~\ref{fi:blob}. Its purpose is to force one specific vertex to have at least one red incident edge in any \kpartition{2}. It consists of two vertices $s$ and $t$ and two groups of six vertices each $u_1, \dots, u_6$ and $v_1, \dots v_6$ (these vertices are shown in black in Figure~\ref{fi:blob}). These vertices are then \emph{linked} by means of pairs of crossing edges. For all $i=1,2,3,4,5,6$ and $j=1,3,5$, $s$ is linked to $v_i$, $t$ is linked to $u_i$, $v_i$ is linked to $u_i$, $u_j$ is linked to $u_{j+1}$ and to $v_{j+1}$, and $v_j$ is linked to $v_{j+1}$.
The edge incident to $v_i$ or $u_i$ that crosses an edge incident to $s$ or $t$ is called an \emph{exterior edge}. All other edges incident to  $v_i$ or $u_i$ are called \emph{interior edges}. Exterior and interior edges are shown in bold and in gray, respectively, in Figure~\ref{fi:blob-1}.

\begin{lemma}\label{le:blob}
In any \kpartition{2} of a blob gadget there is at least one red edge incident to $s$.
\end{lemma}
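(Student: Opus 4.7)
The plan is to argue by contradiction. Suppose that some \kpartition{2} of $B$ has no red edge incident to $s$. The argument rests on two basic facts: (i) in any \kpartition{2}, each pair of crossing edges must receive one red and one blue color, as otherwise two same-colored edges would violate the planarity of their color class; and (ii) every vertex is incident to at most two red edges.

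Under the contradiction hypothesis, all six edges $(s, v_i)$ are blue. By (i), the crossing partner of each $(s, v_i)$ must be red, and by construction of the blob this partner is an exterior edge incident to $v_i$. Hence every $v_i$ carries at least one red exterior edge, leaving it at most one additional red edge by (ii).

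From here, I would exploit the three-fold symmetry of the gadget: its interior decomposes into three blocks, indexed by $j \in \{1,3,5\}$, each on the vertices $\{v_j, v_{j+1}, u_j, u_{j+1}\}$ together with the associated interior edges and crossings (notably the diagonal link $(u_j, v_{j+1})$ and the pairs $(v_i, u_i)$). It should suffice, by symmetry, to reason about a single block. Inside a block, I would propagate the color forcings imposed by the red exterior edges through the interior crossings, each of which must again be split red/blue, and use the tight remaining red-degree budget at $v_j$ and $v_{j+1}$ to show that some interior edge is forced red at a vertex already saturated with red edges, or that two same-colored edges must cross. Either outcome contradicts the existence of such a \kpartition{2} with no red edge at $s$.

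The main obstacle is this last combinatorial step: carefully tracking which interior edge plays the role of the forced-red exterior edge at $v_j$ and $v_{j+1}$, and which colorings of the remaining block edges are compatible with both the crossing constraint (i) and the degree constraint (ii). The cleanest route is likely a small case analysis over the possible identities of the exterior edges within one block, using the symmetry between $u$'s and $v$'s to cut the number of cases in half.
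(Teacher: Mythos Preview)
Your opening moves are right: the contradiction hypothesis forces every exterior edge at the $v_i$'s to be red. But from that point on, your plan to localize to a single block and run a case analysis cannot succeed as stated, because you never invoke the degree constraint at the \emph{other} hub vertex $t$. That step is not optional. Within a block $\{v_j,v_{j+1},u_j,u_{j+1}\}$ where only $v_j$ and $v_{j+1}$ are known to carry a red exterior edge, there \emph{is} a valid coloring of the five interior links: for instance, make the red interior edges be those at $v_j$ (in the $v_j$--$v_{j+1}$ link), at $u_j$ (in the $v_j$--$u_j$ and $u_j$--$u_{j+1}$ links), at $v_{j+1}$ (in the $u_j$--$v_{j+1}$ link), and at $u_{j+1}$ (in the $v_{j+1}$--$u_{j+1}$ link). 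Every vertex stays within its red budget. So no block-local contradiction is available, and the ``symmetry between $u$'s and $v$'s'' you appeal to does not hold under the hypothesis, which breaks it by singling out $s$.

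The missing idea is a pigeonhole step through $t$: since $t$ has at most two red incident edges, at most two of the six links $t$--$u_i$ have their $t$-side edge red, hence at least four of the $u_i$'s have their exterior edge red. Among the three odd-indexed pairs $(u_1,u_2),(u_3,u_4),(u_5,u_6)$, some pair $(u_j,u_{j+1})$ therefore has both exterior edges red. Now in \emph{that} block all four vertices already carry one red exterior edge, so the total red-interior budget is $4$, while the five interior links force five red interior incidences into the block --- an immediate contradiction. This is exactly the route the paper takes; once you add the $t$-step, the per-block case analysis you sketched collapses to a one-line counting argument.
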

\begin{proof}
We first observe that a blob gadget admits a \kpartition{2}. An example is shown in Figure~\ref{fi:blob-2}, where red edges are represented as bold edges.  
We use the notation of Figure~\ref{fi:blob-1}. Suppose, as a contradiction, that there exists a \kpartition{2} of a blob gadget such that there is no red edge incident to $s$. In this case the exterior edges of each $v_i$ (for $i=1,2,\dots,6$) must be red. Since there are at most two red edges incident to $t$, there exist two consecutive vertices $u_j $ and $u_{j+1}$ with $j$ odd whose exterior edges are red. Consider now the five pairs of crossing edges linking the vertices in the set $S=\{u_j, u_{j+1}, v_j, v_{j+1}\}$. These pairs contain ten interior edges each having one end-vertex in $S$.  Thus, there are five interior red edges incident to the vertices of $S$. Since $S$ has only four vertices, one of them, call it $w$, has two interior red edges. Since $w$ has also one exterior red edge it has three incident red edges, which is impossible in a \kpartition{2}.
\end{proof}

\begin{figure}[t]
\begin{subfigure}[b]{.5\linewidth}
\centering
\includegraphics[width=0.65\linewidth, page=2]{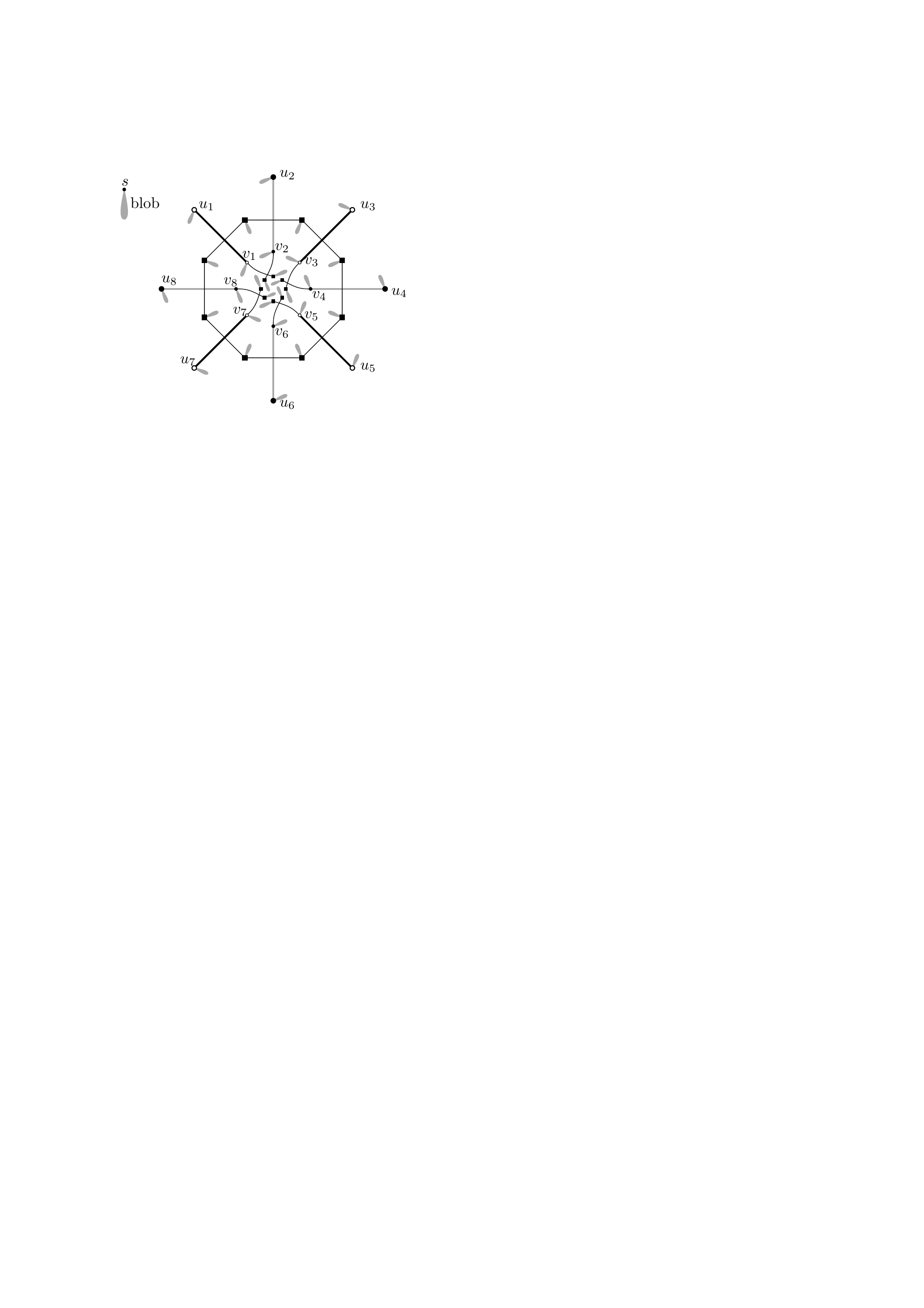}
\caption{}\label{fi:variable-1}
\end{subfigure}%
\begin{subfigure}[b]{.5\linewidth}
\centering
\includegraphics[width=0.65\linewidth, page=3]{variable.pdf}
\caption{}\label{fi:variable-2}
\end{subfigure}
\caption{(a) A variable gadget $V$ ($k=4$). (b) A \kpartition{2} of $V$.\label{fi:variable}}
\end{figure}

\paragraph{Variable gadget} The \emph{variable gadget} is shown in Figure~\ref{fi:variable-1}. Let $k$ be the number of clauses. The variable gadget contains a cycle $C$ of length $2k$ (squared vertices in Figure~\ref{fi:variable-1}) and each edge of $C$ is crossed by an edge $(u_i,v_i)$ (for $i=1,\dots,2k$). The vertices $v_i$ and $v_{i+1}$ with $i$ odd are linked by a pair of crossing edges. Also, a blob gadget is ``attached'' to the vertices of $C$, to the vertices $v_i$, and to the vertices $u_i$ (for $i=1,\dots,2k$) so that each such vertex coincides with the vertex $s$ of the blob. The edges $(u_i,v_i)$ with $i$ odd (bold in Figure~\ref{fi:variable-1}) are called \emph{variable edges}, while those with $i$ even (gray in Figure~\ref{fi:variable-1}) are called \emph{negated edges}. A vertex $u_i$ is a \emph{variable vertex} if it is an end-vertex of a variable edge, or a \emph{negated vertex} otherwise. A variable edge $(u_i,v_i)$ and a negated edge $(u_{i+1},v_{i+1})$ with $i$ odd are called \emph{twin edges}. 

\begin{lemma}\label{le:variable}
In any \kpartition{2} of a variable gadget, no variable edge and negated edge have the same color.
\end{lemma}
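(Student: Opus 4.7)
My plan is to prove the lemma in two stages. First, I argue locally that for each pair of twin edges---a variable edge $e_v = (u_i, v_i)$ and its twin negated edge $e_n = (u_{i+1}, v_{i+1})$ with $i$ odd---the two edges must receive different colors. Second, I use the cyclic structure of $C$, together with the blobs attached at the cycle vertices, to show that this alternation propagates consistently around the gadget, so that all variable edges end up sharing one color and all negated edges share the opposite color. This yields the stronger claim stated in the lemma.

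For the local twin argument the three ingredients I combine are: (a)~each pair of crossing edges of the gadget must contain exactly one red and exactly one blue edge, since both $G_R$ and $G_B$ are plane; (b)~by Lemma~\ref{le:blob}, each of $u_i, v_i, u_{i+1}, v_{i+1}$ is identified with the vertex $s$ of an attached blob, and hence has at least one red edge inside that blob; (c)~in a \kpartition{2} each vertex has red degree at most~$2$. Assuming toward contradiction that $e_v$ and $e_n$ receive the same color, I would count red edges at an appropriate bottleneck vertex among $v_i, v_{i+1}, u_i, u_{i+1}$: the crossing pair linking $v_i$ and $v_{i+1}$ contributes one forced red edge by~(a), the attached blob contributes at least one more by~(b), and the assumed common color of $e_v$ and $e_n$ then forces a third red edge incident to that vertex, violating~(c).

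For the global propagation, I would fix the color of $(u_1, v_1)$ and examine, for each cycle vertex $c$ whose two incident $C$-edges are crossed by consecutive edges $(u_j, v_j)$ and $(u_{j+1}, v_{j+1})$, the constraints imposed by the blob at $c$ (via Lemma~\ref{le:blob}) and the red-degree bound at $c$ on these two incident $C$-edges. Translating these constraints back through~(a), I would deduce that the color of $(u_{i+2}, v_{i+2})$ equals that of $(u_i, v_i)$. Iterating this around $C$ shows that every variable edge shares the color of $(u_1, v_1)$, and the twin-difference result then forces every negated edge to take the opposite color.

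The main obstacle will be the local counting argument. Depending on which edge of the crossing pair between $v_i$ and $v_{i+1}$ is the red one, and on whether the extra red edge at the chosen bottleneck vertex stems from $e_v$ or from $e_n$, several parallel subcases must be verified; the bookkeeping between the blob's forced red edge, the crossing pair's forced red edge, and the assumed common color of $e_v$ and $e_n$ has to be carried out precisely at each of the four candidate bottleneck vertices in order to guarantee a concrete red-degree-$3$ violation in every subcase.
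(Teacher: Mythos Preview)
Your stage~1 argument has a genuine gap in the ``both blue'' subcase. If $e_v=(u_i,v_i)$ and $e_n=(u_{i+1},v_{i+1})$ are both blue, then neither contributes a red edge at any of the four vertices $u_i,v_i,u_{i+1},v_{i+1}$ you single out as candidate bottlenecks. At $v_i$ (say) you see at most two red edges---one from the blob and possibly one from the linking pair $\{v_i,v_{i+1}\}$---and the same holds for $v_{i+1}$; the vertices $u_i,u_{i+1}$ see only the blob edge. So no red-degree-$3$ violation arises at any of those four vertices, and your local counting, which explicitly assumes the ``third red edge'' comes from $e_v$ or $e_n$, does not go through.

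The correct bottleneck for the ``both blue'' case is the \emph{cycle vertex} $w$ of $C$ shared by the two $C$-edges that $e_v$ and $e_n$ cross: if both $e_v,e_n$ are blue then both of those $C$-edges are red, and together with the blob at $w$ this gives three red edges at $w$. This is exactly the argument the paper uses (and it in fact applies to every consecutive pair $(u_j,v_j),(u_{j+1},v_{j+1})$, not just twins). Your $v_i$/$v_{i+1}$-linking argument is correct for the ``both red'' case and matches the paper's second step. So all the ingredients you need are present---you even invoke the cycle vertices in stage~2---but the case split is misallocated: the ``both blue'' half of stage~1 must be handled at a cycle vertex, not at the $u$- or $v$-vertices. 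Once you reassign that subcase, your two stages collapse into the paper's argument: consecutive pairs are never both blue (cycle-vertex argument), twin pairs are never both red ($v$-linking argument), and alternation around $C$ follows.
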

\begin{proof}
We first show that any two edges $(u_i,v_i)$ and $(u_{i+1},v_{i+1})$ ($i=1,2,\dots,2k-1$) cannot be both blue. If they are both blue, the two edges $e_i$ and $e_{i+1}$ crossing $(u_i,v_i)$ and $(u_{i+1},v_{i+1})$ are both red. Let $w$ be the vertex shared by $e_i$ and $e_{i+1}$. Since by Lemma~\ref{le:blob} vertex $w$ has an incident red edge in the blob attached to it, it has three incident red edges, which is impossible in a \kpartition{2}. 
We now prove that two twin edges $(u_i,v_i)$ and $(u_{i+1},v_{i+1})$ cannot be both red. If they are both red, vertices $v_i$ and $v_{i+1}$  are linked by a pair of crossing edges, one of which must be colored red. Suppose it is the one incident to $v_i$ and call it $e$. Then $v_i$ has three incident red edges: the edge $e$, the edge $(u_i,v_i)$, and, by Lemma~\ref{le:blob}, an edge of the blob attached to it. This is impossible in a \kpartition{2}. 
By the two arguments above, any pair of edges  $(u_i,v_i)$ and $(u_{i+1},v_{i+1})$ ($i=1,\dots,2k-1$) have distinct colors.    
\end{proof}

Based on Lemma~\ref{le:variable}, in any \kpartition{2} of a variable gadget, the variable edges all have the same color, while the negated edges all have the opposite color. Thus, a variable gadget has two possible \kpartition{2}{s}: the one shown in Figure~\ref{fi:variable-2} (red edges are  bold), and the one obtained by exchanging the colors. (In fact, the number of possible \kpartition{2}{s} is larger due to the fact that each blob gadget admits many different \kpartition{2}{s}. However the coloring of the blob gadget is not relevant for our discussion.)  We use these two configurations to encode the two truth values of the variable $x$ represented by the gadget. More precisely, when the variable edges are red, $x$ is \emph{true}; when they are blue, $x$ is \emph{false}. Since the negated edges have the other color with respect to the variable edges, we can use them to represent the negation $\neg{x}$ of $x$.

\begin{figure}[t]
\begin{subfigure}[b]{.5\linewidth}
\centering
\includegraphics[width=0.65\linewidth, page=1]{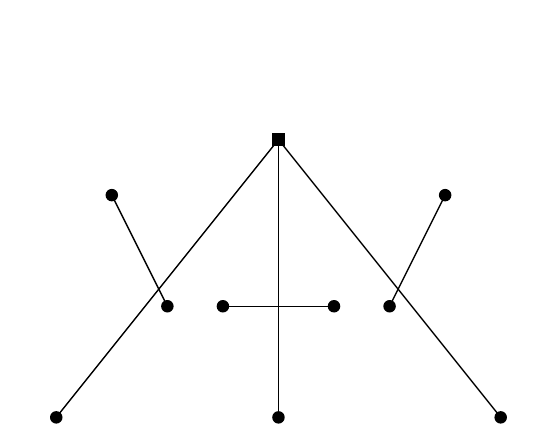}
\caption{}\label{fi:clause-1}
\end{subfigure}%
\begin{subfigure}[b]{.5\linewidth}
\centering
\includegraphics[width=0.65\linewidth, page=2]{clause.pdf}
\caption{}\label{fi:clause-2}
\end{subfigure}
\caption{(a) A clause gadget. (b) Connection to the variable gadgets.}\label{fi:clause}
\end{figure}

\paragraph{Clause gadget} The \emph{clause gadget} is shown in Figure~\ref{fi:clause-1}. It has three edges with a common end-vertex, called a \emph{clause vertex}, each crossed by a distinct edge. The three edges incident to the clause vertex are called \emph{false edges}, while the others are called \emph{true edges}. The idea is that each pair of crossing edges corresponds to a literal in the clause and each edge of the pair has an end-vertex in that literal's corresponding variable gadget. If the literal is true, the true edge is red otherwise the false edge is red. It is immediate to see that in any \kpartition{2} at least one true edge must be red. 

We now explain how to connect a clause gadget to the variable gadgets of its literals. Suppose that a clause $c$ contains a variable $x$ or its negation $\neg{x}$. Choose one false edge $e$ of the clause. Let $v$ be the vertex of $e$ different from the clause vertex, and let $w$ be an end-vertex of the edge $e'$ crossing $e$. Let $u_i$ ($1 \leq i \leq 2k$, $i$ odd) be a variable vertex of the variable gadget representing $x$ and let $u_{i+1}$ be the negated vertex following $u_{i}$. If $c$ contains the variable $x$, then $v$ is identified with $u_i$, while $w$ is identified with $u_{i+1}$. If $c$ contains the negation $\neg{x}$, then $v$ is identified with $u_{i+1}$, while $w$ is identified with $u_{i}$ (see Figure~\ref{fi:clause-2}). 
In this way the truth value of the variable is transmitted to the clause. Namely, suppose that $c$ contains $x$. If $x$ is true, the variable edge incident to $u_i$ is colored red; since, by Lemma~\ref{le:blob}, $u_i$ has another incident red edge in the blob gadget attached to it, the false edge $e$ cannot be red. On the other hand, the true edge $e'$ can be red because the negated edge incident to $u_{i+1}$ is blue. If $x$ is false, the true edge $e'$ cannot be red, while the false edge $e$ can be red. Clearly, if $c$ contains $\neg{x}$, we have the opposite situation. 

We now describe how to construct an instance $G_{\phi}$ of DEGREE-2 EDGE PARTITION from an instance $\phi$  of PLANAR 3-SAT. We create a variable gadget for each variable and a clause gadget for each clause. The clause gadgets are connected to the variable gadgets of their literals as explained (the number of pairs of variable/negated vertices is equal to the number of clauses and therefore each clause can use a different pair). We realize the connections between clauses and variable so to respect the planar embedding of $\phi$. An example is shown in Figure~\ref{fi:planar3sat}. Observe that all gadgets and $G_{\phi}$ are NIC-plane graphs.

\begin{figure}[t]
\centering
\begin{subfigure}[b]{.5\linewidth}
\centering
\includegraphics[width=0.8\linewidth, page=1]{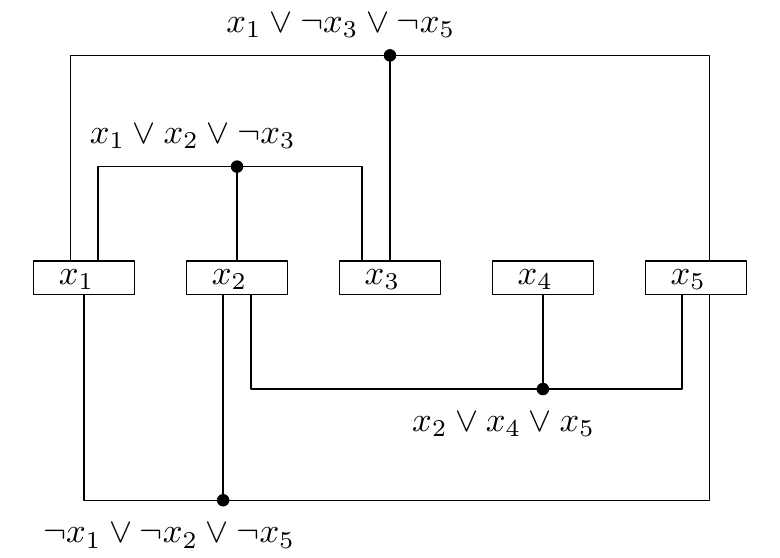}
\end{subfigure}%
\begin{subfigure}[b]{.5\linewidth}
\centering
\includegraphics[width=0.8\linewidth, page=2]{planar3sat.pdf}
\end{subfigure}
\caption{Left: An instance of PLANAR 3-SAT. Right: The corresponding instance of DEGREE-2 EDGE PARTITION.}\label{fi:planar3sat}
\end{figure}

\begin{theorem}\label{th:deg2}
It is NP-complete to decide whether a $1$-plane graph $G$ admits an edge partition such that the maximum vertex degree of the red graph is two, even if $G$ is NIC-plane.
\end{theorem}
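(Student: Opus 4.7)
The plan is to conclude the NP-completeness proof by establishing membership in NP and verifying the two directions of the equivalence that $\phi$ is satisfiable if and only if $G_\phi$ admits a degree-2 edge partition; the reduction itself is already given by the construction above. Membership in NP is immediate, since a guessed 2-coloring can be checked in polynomial time by verifying that every crossing pair receives opposite colors (which is equivalent to both induced subgraphs being plane, because $G_\phi$ is 1-plane) and that no vertex has red degree greater than two. The instance $G_\phi$ has polynomial size: each variable gadget contributes $O(k)$ vertices from its $2k$-cycle together with the $O(k)$ attached constant-size blobs, so $|V(G_\phi)| = O(nk)$ where $n$ and $k$ are the numbers of variables and clauses in $\phi$. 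The NIC-plane property, already claimed in the construction, follows because within each gadget any two crossing pairs share at most one vertex and inter-gadget attachments identify only single vertices.

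For the forward direction, given a satisfying assignment, I would color each variable gadget as in Figure~\ref{fi:variable-2} with variable edges red exactly when the variable is true, color each blob as in Figure~\ref{fi:blob-2}, and for each clause choose one satisfying literal and color red both the true edge of that literal and the false edges of the two unsatisfying literals; all remaining edges are blue. A short local case analysis at clause, variable, negated, and blob vertices confirms that no vertex exceeds red degree two: at the clause vertex at most two false edges are red; at a variable (resp.\ negated) vertex adjacent to a clause the single blob contribution combines with at most one further red edge in a way consistent with the chosen truth value.

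The backward direction is the crux. Given a degree-2 edge partition of $G_\phi$, Lemma~\ref{le:variable} forces the variable edges of each variable gadget to share one color and the negated edges to take the opposite color; this yields a truth assignment (a variable is true iff its variable edges are red). Fix any clause $c$: its clause vertex is incident to three false edges, so at most two can be red, hence some false edge $f_j$ is blue, and its crossing true edge $t_j$ must then be red. The endpoint $w$ of $t_j$ identified with a vertex of a variable gadget has, by Lemma~\ref{le:blob}, at least one incident red edge from the blob attached at $w$; together with $t_j$ this already accounts for two red edges at $w$, so the adjacent variable or negated edge incident to $w$ must be blue. Tracing the two cases of the identification in Figure~\ref{fi:clause-2} shows that the literal associated with $t_j$ is true under the derived assignment, so $c$ is satisfied; since $c$ was arbitrary, $\phi$ is satisfied.

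The main obstacle is this backward direction: the interaction between the degree bound at the clause vertex, the opposite-color rule for crossing pairs, and the blob's guaranteed red edge must combine tightly to force a satisfying literal, and the argument must work for every valid internal coloring of the (flexible) blobs. Lemmas~\ref{le:blob} and~\ref{le:variable} are precisely what makes this possible, reducing the analysis to the exterior edges of each blob and to the rigid two-configuration nature of each variable gadget, so that only the clause-to-variable identification needs to be unpacked in the final step.
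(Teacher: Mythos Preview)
Your overall approach matches the paper's, and your backward direction is correct and in fact more explicit than the paper's own argument. The gap is in your forward direction. You propose, for each clause, to pick one satisfying literal, make its true edge red, and make the false edges of ``the two unsatisfying literals'' red. This fails whenever a clause has more than one true literal. Under the literal reading, a second satisfying literal is neither chosen nor unsatisfying, so both its true and false edges end up blue by your ``all remaining edges are blue'' rule---but they cross, so one must be red. Under the charitable reading (``the other two literals''), suppose $c$ contains $x$ with $x$ true but $x$ is not the chosen literal; the false edge attached to the variable vertex $u_i$ is then red, while $u_i$ already carries a red variable edge (since $x$ is true) and a red blob edge (by Lemma~\ref{le:blob}), giving red degree three at $u_i$.

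The fix is exactly what the paper does: do not pick a single witness, but color the true edge of \emph{each} literal red if and only if that literal is true under the assignment (equivalently, the false edge is red iff the literal is false). Then at every $u_i$/$u_{i+1}$ the false or true edge is red precisely when the corresponding variable or negated edge is blue, so each such vertex has at most two red edges; and at the clause vertex the number of red false edges equals the number of false literals, which is at most two. With this correction your proof goes through and coincides with the paper's.
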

\begin{proof}
Since one can verify a \kpartition{2} in polynomial time, DEGREE-2 EDGE PARTITION is in NP. Let $\phi$ be an instance of PLANAR 3-SAT and let $G_{\phi}$ be the corresponding instance of DEGREE-2 EDGE PARTITION as described above. We show that $\phi$ is satisfiable if and only if $G_{\phi}$ admits a \kpartition{2}.  
Assume first that $\phi$ is satisfiable. For each variable $x$, if it is true, we use for the variable gadget corresponding to $x$, the \kpartition{2} that assigns the red color to the variable edges; if it is false, we use the \kpartition{2} that assigns the red color to the negated edges. If a clause $c$ contains $x$ (respectively, $\neg{x}$), then we assign to the true edge of the gadget of $c$ connected to the variable gadget of $x$ the color red if and only if $x$ is true (respectively, false). As explained above, this coloring is feasible. Since each clause has at least one true literal, there are at most two red edges incident to the vertex shared by the false edges of that clause. 
Assume now that $G_{\phi}$ admits a \kpartition{2}. By the discussion above, it is easy to see that for each literal, the true edges of the clauses that contain that literal all have the same color, and this color is also assigned to the variable/negated edges corresponding to that literal. We assign to each literal the value true if the corresponding true edges are colored red, false otherwise. Moreover, since each vertex has at most two red incident edges, each clause must have at least one true literal, otherwise there would be three red edges incident to the clause vertex. This implies that $\phi$ is satisfiable.       
\end{proof}

\section{Degree-1 edge partitions}\label{se:deg1}
We show that whether a $1$-plane graph has a \kpartition{1} can be decided in quadratic time, by reducing the problem to 2-SAT. We show how to construct a $2$-CNF formula $\phi$, such that $G$ admits a \kpartition{1} if and only if $\phi$ can be satisfied. For each pair of crossing edges $p = \{e_1,e_2\}$, we assign to $e_1$ the variable $x_p$ and to $e_2$ the negation $\neg{x_p}$. For each vertex $v$ of $G$, let $y_{q_i}$, for $i=0,\dots,k_v-1$, be the $k_v \geq 0$ variables assigned to the edges incident to $v$ and that are crossed in $G$. For each pair $\{y_{q_i},y_{q_j}\}$, such that $0 \leq i \neq j < k_v$, we add to $\phi$ the clause $(y_{q_i} \Rightarrow \neg{y_{q_j}})$, which corresponds to $(\neg{y_{q_i}} \vee \neg{y_{q_j}})$ (observe that some literal $y_{q_i}$ of $\phi$ may be the negation of a variable). For example, consider the edges incident to vertex $v$ in the graph in Figure~\ref{fi:2sat}, we add to $\phi$ the clauses $(\neg{x} \vee y)$ $\wedge$  $(\neg{x} \vee \neg{z})$ $\wedge$  $(y \vee \neg{z})$. 
\begin{wrapfigure}{r}{29mm}
\centering
\includegraphics[scale=0.45]{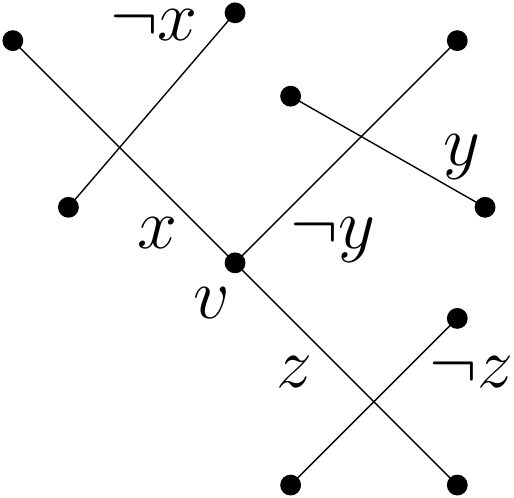}
\caption{\label{fi:2sat} Illustration for the proof of Theorem~\ref{th:deg1}.}
\end{wrapfigure}
Suppose that $G$ admits a \kpartition{1}, and consider the $2$-CNF formula $\phi$ described above. Let $p = \{e_1,e_2\}$ be a pair of crossing edges of $G$, such that $e_1$ is associated with the variable $x_p$, and $e_2$ with $\neg{x_p}$. If $e_1$ is red, then we set $x_p = true$, while if $e_2$ is red then we set $x_p = false$. In this way, we obtain a valid assignment for $\phi$. Indeed, let $v$ be a vertex of $G$, and let $y_{q_i}$, for $i=0,\dots,k_v-1$, be the $k_v \geq 0$ variables corresponding to the edges incident to $v$ and that are crossed in $G$. For each clause $(\neg{y_{q_i}} \vee \neg{y_{q_j}})$, we have that if $y_{q_i}$ is true (and thus $\neg{y_{q_i}}$ is false), then $y_{q_j}$ is false (and thus $\neg{y_{q_j}}$ is true), as otherwise $v$ would have degree two in the red graph. Suppose now that $\phi$ can be satisfied by some assignment $\mathcal A$. Let $p = \{e_1,e_2\}$ be a  pair of crossing edges of $G$, such that $e_1$ is associated with the variable $x_p$, and $e_2$ with $\neg{x_p}$. Color red $e_1$ if $x_p$ is true, while color red $e_2$ if $x_p$ is false. It is immediate to see that, since $\phi$ is satisfied by $\mathcal A$, every vertex is incident to at most one red edge. The above construction takes $O(n^2)$ time, since $1$-plane graphs have $O(n)$ edges, and thus $\phi$ contains $O(n^2)$ clauses. It follows that checking the satisfiability of $\phi$, and finding a valid assignment if it exists, can be done in $O(n^2)$ time~\cite{DBLP:journals/ipl/AspvallPT79}.  The next theorem summarizes the above discussion.

\begin{theorem}\label{th:deg1}
There exists an $O(n^2)$-time algorithm that tests whether an $n$-vertex $1$-plane graph $G$ admits a  \kpartition{1}, and, in the positive case, it computes one. 
\end{theorem}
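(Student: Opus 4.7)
The plan is to reduce the problem to $2$-SAT. First I would observe that in any \kpartition{1}, exactly one edge of each crossing pair must be colored red: if both were blue, the blue graph would inherit that crossing and fail to be plane, while any uncrossed edge can safely be blue. So the only decision is, for each crossing pair $p$, which of its two edges is red, and the only remaining constraint is that no vertex has two incident red edges.

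With this in hand, I would introduce a Boolean variable $x_p$ for each crossing pair $p = \{e_1, e_2\}$, letting the literal $x_p$ represent ``$e_1$ is red'' and $\neg x_p$ represent ``$e_2$ is red''. For every vertex $v$ and every unordered pair of crossed edges incident to $v$, with associated literals $\ell_1, \ell_2$, I would add the 2-clause $\neg \ell_1 \vee \neg \ell_2$ to a formula $\phi$. Each vertex contributes $O(\deg(v)^2)$ clauses, and since a $1$-plane graph has $O(n)$ edges, the total number of clauses is $O(n^2)$ and $\phi$ can be constructed in $O(n^2)$ time.

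The equivalence is then straightforward in both directions: a \kpartition{1} yields a truth assignment by setting each $x_p$ according to which edge of $p$ is red, and every vertex-level clause is satisfied because otherwise some vertex would be incident to two red edges; conversely, a satisfying assignment determines a red edge for each crossing pair and forces every other edge to be blue, and the vertex-level clauses prevent any vertex from accumulating two red neighbors. Applying the linear-time $2$-SAT algorithm of Aspvall, Plass, and Tarjan~\cite{DBLP:journals/ipl/AspvallPT79} to $\phi$ gives an overall $O(n^2)$ running time. The main step requiring care is the planarity observation at the outset: once one fixes that uncrossed edges remain blue and that exactly one edge of each crossing pair becomes red, the remaining constraints are genuinely binary, which is what makes $2$-SAT (rather than a more general SAT variant) the right encoding.
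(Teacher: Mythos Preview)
Your proposal is correct and essentially identical to the paper's own argument: the paper also reduces to $2$-SAT by introducing one variable per crossing pair (with the two literals encoding which edge is red), adds the clause $(\neg \ell_1 \vee \neg \ell_2)$ for every pair of crossed edges incident to a common vertex, and invokes the linear-time $2$-SAT algorithm on the resulting $O(n^2)$-size formula. Your explicit justification that uncrossed edges may be blue and that exactly one edge per crossing pair must be red is a small expository addition, but the reduction and analysis match the paper exactly.
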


\section{Conclusions and Open Problems}\label{se:conclusions}
We proved that every $3$-connected NIC-plane graph admits a \kpartition{3}, and hence an embedding preserving OPVR with vertex complexity at most six. Also, NIC-plane graphs may contain crossing configurations that imply a lower bound of one on the vertex complexity of any embedding preserving OPVR~\cite{DBLP:conf/compgeom/BiedlLM16}. It would be interesting to fill the gap between upper and lower bound. 

We proved that deciding whether a $1$-plane graph admits a \kpartition{2} is NP-complete, even when the graph is a NIC-plane graph. It is of interest to study the existence of approximation algorithms for the problem of computing \kpartition{k}{s} with minimum $k$.

We described a quadratic-time algorithm to decide whether a $1$-plane graph admits a \kpartition{1}. Is this problem solvable in linear time?

{\small \bibliography{paper}}
\bibliographystyle{abbrv}
\end{document}